\newtheorem{lemma}{{Lemma}}
\newtheorem{proof}{{Proof}}
\begin{document}
\title{Throughput and Fairness Trade-off Balancing for UAV-Enabled Wireless Communication Systems}

\author{\IEEEauthorblockN{Kejie Ni$^{\dagger}$, Jingqing Wang$^{\dagger}$, Wenchi Cheng$^{\dagger}$, and Wei Zhang$^{\ddagger}$}~\\[0.2cm]
\vspace{-10pt}

\IEEEauthorblockA{$^{\dagger}$State Key Laboratory of Integrated
Services Networks, Xidian University, Xi'an, China\\
$^{\ddagger}$School of Electrical Engineering and Telecommunications, The University of New South Wales, Sydney, Australia\\
E-mail: \{\emph{kjni@stu.xidian.edu.cn}, \emph{jqwangxd@xidian.edu.cn}, \emph{wccheng@xidian.edu.cn}, \emph{w.zhang@unsw.edu.au}\}}

\vspace{-20pt}
}

\maketitle

\begin{abstract}
Given the imperative of 6G networks' ubiquitous connectivity, along with the inherent mobility and cost-effectiveness of unmanned aerial vehicles (UAVs), UAVs play a critical role within 6G wireless networks. Despite advancements in enhancing the UAV-enabled communication systems' throughput in existing studies, there remains a notable gap in addressing issues concerning user fairness and quality-of-service (QoS) provisioning and lacks an effective scheme to depict the trade-off between system throughput and user fairness. To solve the above challenges, in this paper we introduce a novel fairness control scheme for UAV-enabled wireless communication systems based on a new weighted function. First, we propose a throughput combining model based on a new weighted function with fairness considering. Second, we formulate the optimization problem to maximize the weighted sum of all users' throughput. Third, we decompose the optimization problem and propose an efficient iterative algorithm to solve it. Finally, simulation results are provided to demonstrate the considerable potential of our proposed scheme in fairness and QoS provisioning. 
\end{abstract}

\vspace{10pt}

\begin{IEEEkeywords}
UAV communications, weighted function, user fairness, QoS, resource allocation, trajectory design.
\end{IEEEkeywords}

\section{Introduction}
\IEEEPARstart{W}{ith} the rapid development of novel technologies, the demand for high-speed and reliable wireless communications has escalated. The future 6G networks aim to provide ubiquitous connectivity and massive connectivity. Although existing ground communication facilities generally meet the daily communication demands, they are insufficient when faced with unexpected situations, such as network reconstruction after natural disasters, temporary communication deployments in remote areas, and wireless resource allocation during large-scale gatherings or holiday events~\cite{UAVsituation}. With advantages including high mobility, rapid deployment and cost-effective, Unmanned Aerial Vehicles (UAVs) can be deployed to provide auxiliary communication in these scenarios. Therefore, UAVs play a vital role in the 6G wireless networks~\cite{6GUAV}.  

Despite the promising potential of UAVs in 6G wireless networks, there still exists a number of challenges in optimizing UAV-enabled communication system's throughput while guaranteeing user fairness and quality-of-service (QoS). In~\cite{emergency}, optimal resource allocation scheme is proposed for UAV-based emergency wireless communications. Further considering statistical QoS ptovisionings, the authors in~\cite{qos} propose an efficient framework to jointly optimize spectrum and power efficiencies over SISO/MIMO wireless networks. A UAV-enabled OFDM system is studied~\cite{commonthroughput}~\cite{joint}, where the UAV is dispatched as the mobile base station (BS) to serve a group of ground users. Considering user fairness and QoS, the authors maximizes the minimum average throughput of all users by jointly optimizing resource allocation and trajectory design. Different from maximizing minimum throughput or energy efficiency, Jain's fairness index is also widely adopted to ensure user fairness~\cite{Jain1}~\cite{Jain3}. The smaller the differences of energy efficiency, coverage efficiency or throughput among users are, the greater Jain's fair index is. 

In addition, there exists a trade-off between system throughput and user fairness. However, both Jain's fairness index and maximizing minimum throughput of all users overlook system throughput. The cost of system throughput for fairness is rather large, especially when there are users with particularly poor channel conditions or those located in extremely remote areas. Therefore, there is a lack of a flexible mechanism that can adjust the scale of this trade-off. 

In light of the aforementioned challenges, it is imperative to develop the mobility and application potential of UAVs in next-generation wireless networks. In this paper, we propose a trade-off control mechanism for UAV-enabled wireless communication systems that can adjust the scale of the trade-off between system throughput and user fairness. We first propose throughput combining model based on a new weighted function. Next, we formulate the optimization problem to maximize the weighted sum function of all users' throughput. According to the convex condition of the weighted function, the original optimization problem is decomposed into two different structures, which are further decomposed into two separate subproblems. Due to their non-convexity, we reconstruct them as convex problems and propose an efficient iterative algorithm by employing Lagrange dual, slack variables, and Taylor approximation. Finally, comprehensive simulation results are provided to demonstrate the considerable potential of our proposed scheme in fairness and QoS provisioning. 

The rest of this paper is organized as follows. In Section \ref{Sec:System}, we introduce the system model, properties of our proposed weighted function and the problem formulation. In Section \ref{Sec:Sol}, we propose an efficient iterative optimization algorithm. In Section \ref{Sec:Results}, we provide numerical results to demonstrate the desirable properties and performance of our proposed scheme. Finally, we conclude the paper in Section \ref{Sec:conclusion}.
\begin{figure}[!t]
	\centering  
	\includegraphics[width=0.8\columnwidth]{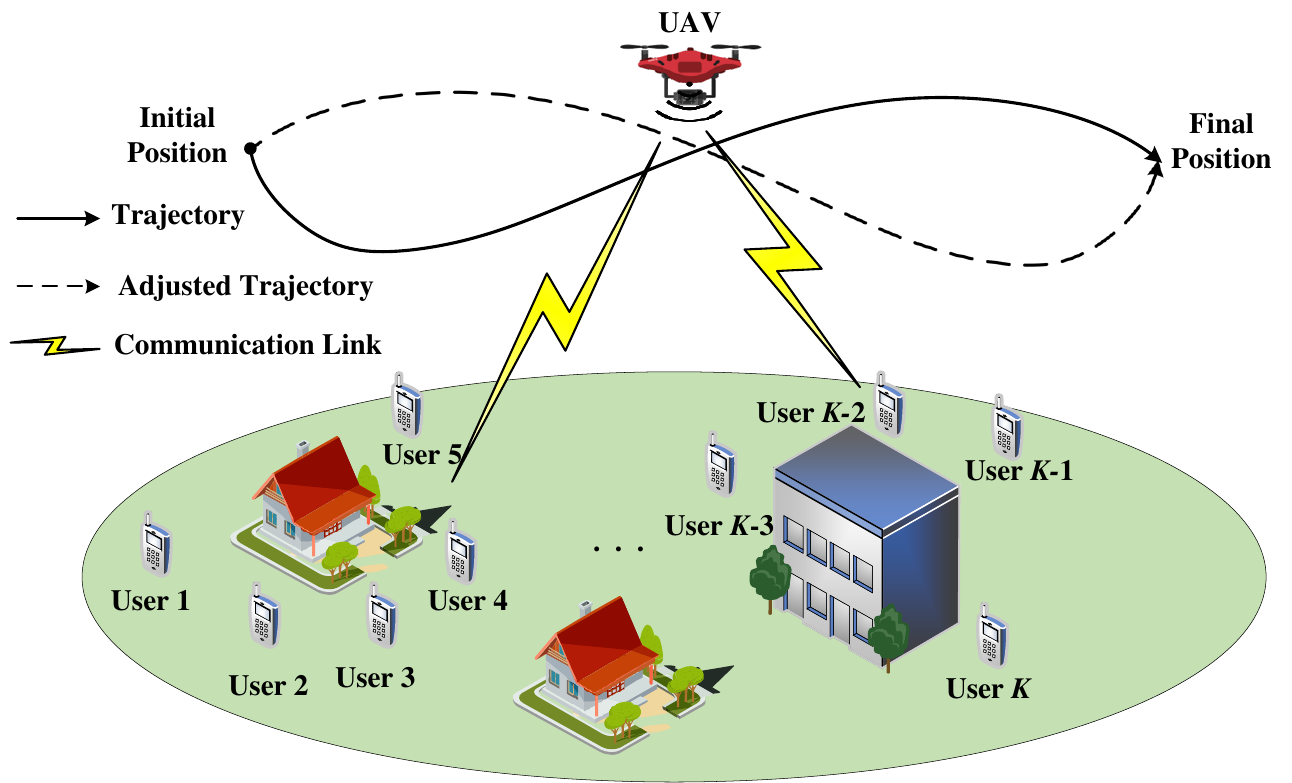}  
	\caption{UAV-enabled wireless network}  
	\label{fig0}  
\end{figure}
\section{System Model and Problem Formulation}\label{Sec:System}
\subsection{System Model}
As shown in Fig.\ref{fig0}, we consider a UAV-enabled OFDM wireless communication system where \emph{K} ground users can be served simultaneously by one UAV. The 3-D Cartesian coordinates of ground user \emph{k} $\in$ $\mathcal{K}$ = $\left\{1,2,...,K\right\}$ is assumed as $\left[\mathbf{w}_k^T,0\right]$, where $\mathbf{w}_k$ = $\left[x_k,y_k\right]^T\in\mathbb{R}^{2\times1}$. We discretize UAV flight period \emph{T} into \emph{N} time slots indexed by \emph{n} $\in$ $\mathcal{N}$ = $\left\{1,2,...,\emph{N}\right\}$, each of which has a equal length denoted by $\delta_t$ = $\frac{T}{N}$. The UAV flies at a fixed altitude $H$ with a given maximum speed $V_{max}$. The UAV trajectory is characterized by a sequence of UAV locations \scalebox{0.8}{$\left[\mathbf{q}\left[n\right]^T,H\right]^T$} , where $\mathbf{q}$$\left[n\right]$ = $\left[x\left[n\right],y\left[n\right]\right]^T$ $\in$  $\mathbb{R}^{2\times1}$.

We adopt the elevation-angle-dependent Rician fading channel model to model the communication links between the UAV and ground users, which is more accurate than free space fading channel. The channel power gain from the UAV to user $k$ in time slot $n$ can thus be expressed as
\begin{equation}
	\small
	h_k[n]=h_0f_k[n]d^{-2}_k[n]=\frac{h_0f_k[n]}{H^2+||\mathbf{q}[n]-\mathbf{w}_k||^2}
\end{equation}
where $h_0$ denotes the channel power gain at the reference distance $d_0=1$ m, and $d_k[n]=\sqrt{H^2+||\mathbf{q}[n]-\mathbf{w}_k||^2}$ is the distance between user $k$ and the UAV in time slot n. Here, $f_k[n]$ is the effective fading power of the channel, which can be approximated as~\cite{Rician_approximation}
\begin{equation}
	\small
	f_k[n]=C_1+\frac{C_2}{1+\mathrm{exp}(-(B_1+B_2\theta_k[n]))}
\end{equation}
where $C_1, C_2, B_1$ and $B_2$ are all constants determined by channel conditions, and $\theta_k[n] = \frac{H}{\sqrt{H^2+||\mathbf{q}[n]-\mathbf{w_k}||^2}}$.

As a result, the achievable throughput of user $k$ in time slot $n$ denoted by $R_k[n]$ in bits/second/Hz (bps/Hz) is given by
\begin{equation}
	\small
	\begin{aligned}
	R_k[n]=b_k[n]\mathrm{log}_2(1+\frac{p_k[n]Ph_k[n]}{b_k[n]BN_0})
	\end{aligned}
\end{equation}
where $N_0$ denotes the power spectral density of the addictive white Gaussian noise (AWGN) at the receivers, $b_k[n]$ and $p_k[n]$ are bandwidth and power allocation coefficients with regard to available power $P$ and bandwidth $B$.

In order to depict and control the difference of users' instantaneous throughput, inspired by~\cite{anovelrate}, we propose a new weighted sum function of instantaneous throughput in the following form

\begin{equation}
	\small
	C=\sum_{k=1}^{K} \omega_kR_k[n]\label{con:weighted sum}
\end{equation} 
where
\begin{equation}
	\small
	\omega_k=\frac{e^{-\alpha{R_k[n]}}}{\sum_{i=1}^{K} e^{-\alpha{R_i[n]}}}.
	\label{weighted function}
\end{equation}

The proposed weighted sum function is able to compensate those users with lower throughput resulted from unfair resource allocation and UAV trajectory scheme. The value of fairness factor $\alpha$ determines how much the compensation is made. In other word, we can flexibly adjust the degree of the trade-off by tuning the value of $\alpha$.

Before problem formulation, we define a function of $n$ variables $x\triangleq[x_1,\dots,x_n],x_i>0,i= 1,\dots,n$, parameterized by $\alpha\geq0$ as
\begin{equation}
	\small
	H_{\alpha}(x)=\frac{\sum_{j=1}^{K}{x_j}e^{-\alpha{x_j}}}{\sum_{i=1}^{K} e^{-\alpha{x_i}}}.\label{Hx}
\end{equation}
Next we will show several properties of the weighted sum function (\ref{con:weighted sum}) by the following two lemmas.
\begin{lemma} 
	\emph{The function $H_\alpha(x)$ given by (\ref{Hx}) is concave and non-decreasing for $\alpha{x_i}\leq1,\forall i$}.
	\label{lemma1}
\end{lemma}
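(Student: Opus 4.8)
The plan is to establish, on the convex domain $\mathcal{D}=\{x\in\mathbb{R}^{K}:x_{i}>0,\ \alpha x_{i}\le 1\ \text{for all }i\}$, that every first-order partial derivative of $H_{\alpha}$ is nonnegative and that the Hessian of $H_{\alpha}$ is negative semidefinite; coordinatewise monotonicity and concavity then follow because $\mathcal{D}$ is an axis-aligned box. Throughout I abbreviate $w_{k}=e^{-\alpha x_{k}}/\sum_{i}e^{-\alpha x_{i}}$, so that $w_{k}>0$, $\sum_{k}w_{k}=1$ and $H_{\alpha}(x)=\sum_{k}w_{k}x_{k}$.

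First I would record the identity $\partial w_{k}/\partial x_{l}=\alpha w_{k}(w_{l}-\delta_{kl})$, from which a short computation gives
\begin{equation}
\small
\frac{\partial H_{\alpha}}{\partial x_{k}}=w_{k}a_{k},\qquad a_{k}:=1+\alpha\big(H_{\alpha}(x)-x_{k}\big).
\end{equation}
For monotonicity it then suffices to check $a_{k}\ge 0$: since $x_{i}>0$ we have $H_{\alpha}(x)\ge\min_{i}x_{i}>0$, and since $\alpha x_{k}\le 1$ we get $a_{k}=1-\alpha x_{k}+\alpha H_{\alpha}(x)\ge\alpha H_{\alpha}(x)\ge 0$, so $\partial H_{\alpha}/\partial x_{k}=w_{k}a_{k}\ge 0$ for every $k$ and $H_{\alpha}$ is non-decreasing.

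For concavity I would differentiate once more, using in addition $\partial a_{k}/\partial x_{l}=\alpha(w_{l}a_{l}-\delta_{kl})$, to obtain the Hessian entries $M_{kl}=\alpha w_{k}\big[w_{l}(a_{k}+a_{l})-\delta_{kl}(a_{k}+1)\big]$. The key observation is that for an arbitrary direction $v\in\mathbb{R}^{K}$ the associated quadratic form collapses, using the $k\leftrightarrow l$ symmetry, to
\begin{equation}
\small
v^{\top}Mv=\alpha\big(2pq-r\big),\qquad p:=\sum_{k}v_{k}w_{k},\ \ q:=\sum_{k}v_{k}w_{k}a_{k},\ \ r:=\sum_{k}v_{k}^{2}w_{k}(1+a_{k}).
\end{equation}
It then remains to show $r\ge 2pq$. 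Here I would use the two normalizations $\sum_{k}w_{k}=1$ and $\sum_{k}w_{k}a_{k}=1$ (the latter is just $\sum_{k}w_{k}x_{k}=H_{\alpha}(x)$ rewritten) together with Cauchy--Schwarz with weights $w_{k}$ and $w_{k}a_{k}$ respectively: $p^{2}\le\big(\sum_{k}v_{k}^{2}w_{k}\big)\big(\sum_{k}w_{k}\big)=\sum_{k}v_{k}^{2}w_{k}$ and, where $a_{k}\ge 0$ is used, $q^{2}\le\big(\sum_{k}v_{k}^{2}w_{k}a_{k}\big)\big(\sum_{k}w_{k}a_{k}\big)=\sum_{k}v_{k}^{2}w_{k}a_{k}$; adding these and using $2pq\le p^{2}+q^{2}$ gives $2pq\le\sum_{k}v_{k}^{2}w_{k}(1+a_{k})=r$. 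Hence $v^{\top}Mv\le 0$ for every $v$, which proves concavity on $\mathcal{D}$.

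The step I expect to be the main obstacle is the Hessian computation itself: correctly propagating the Kronecker-delta terms through the two differentiations and then spotting that the quadratic form telescopes into the compact form $\alpha(2pq-r)$; once that reduction is available, the remainder is the two-line Cauchy--Schwarz estimate above, and it is exactly there that the hypothesis $\alpha x_{i}\le 1$ is essential (through $a_{k}\ge 0$), so one should not expect the result to survive without it. An alternative angle would be to exploit $H_{\alpha}(x)=-\partial_{\alpha}\log\sum_{i}e^{-\alpha x_{i}}$ or the free-energy identity $H_{\alpha}(x)=-\tfrac{1}{\alpha}\log\sum_{i}e^{-\alpha x_{i}}-\tfrac{1}{\alpha}\sum_{i}w_{i}\log w_{i}$, but neither makes the sign of the second derivative transparent, so I would commit to the direct computation outlined here.
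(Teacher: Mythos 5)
Your proposal is correct and follows essentially the same route as the paper's Appendix A: compute the gradient to get $\partial H_\alpha/\partial x_k = w_k\bigl(1-\alpha x_k+\alpha H_\alpha(x)\bigr)\ge 0$, then show the Hessian quadratic form is nonpositive by combining Cauchy--Schwarz with the AM--GM bound $2pq\le p^2+q^2$ under the hypothesis $\alpha x_i\le 1$. The only difference is bookkeeping: your softmax-weight notation collapses the paper's four terms $A_1-2A_2+A_3-2A_4$ into the single expression $\alpha(r-2pq)$, which is a cleaner presentation of the identical argument.
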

\begin{proof}
	See Appendix \ref{appendix 1}.
\end{proof}
\begin{lemma} 
	\emph{For $x_i\geq 0,\forall i$, as $\alpha$ $\to$ $+\infty$, $H_{\alpha}(x)$ simplifies to}
	\begin{equation}
		\small
		H_\infty (x)\triangleq \lim\limits_{\alpha \to +\infty} H_\alpha (x)=\underset{j}\min x_j. 
	\end{equation}
	\label{lemma2}
\end{lemma}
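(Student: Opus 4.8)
The plan is to exploit the fact that $H_\alpha$ is a softmin-weighted average and that, as the ``temperature'' $1/\alpha$ goes to zero, essentially all of the weight concentrates on the smallest coordinate. Set $x_{\min}\triangleq\min_j x_j$ and factor $e^{-\alpha x_{\min}}$ out of both the numerator and the denominator in (\ref{Hx}):
\[
H_\alpha(x)=\frac{\sum_{j=1}^{K} x_j\, e^{-\alpha(x_j-x_{\min})}}{\sum_{i=1}^{K} e^{-\alpha(x_i-x_{\min})}}.
\]
Because the coefficients $e^{-\alpha x_j}/\sum_i e^{-\alpha x_i}$ are strictly positive and sum to one, $H_\alpha(x)$ is a convex combination of $x_1,\dots,x_K$; hence $H_\alpha(x)\ge x_{\min}$ for every $\alpha\ge 0$, which already gives the lower bound $\liminf_{\alpha\to+\infty}H_\alpha(x)\ge x_{\min}$.

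For the matching upper bound I would subtract $x_{\min}$ and estimate the remainder. Writing $\mathcal{M}\triangleq\{j:x_j=x_{\min}\}$ and $m\triangleq|\mathcal{M}|\ge 1$, the indices in $\mathcal{M}$ contribute $1$ each to the denominator and $0$ each to the numerator, so
\[
0\le H_\alpha(x)-x_{\min}=\frac{\sum_{j\notin\mathcal{M}} (x_j-x_{\min})\, e^{-\alpha(x_j-x_{\min})}}{\sum_{i=1}^{K} e^{-\alpha(x_i-x_{\min})}}.
\]
Since the denominator is at least $m\ge 1$, the right-hand side is bounded above by $\sum_{j\notin\mathcal{M}} (x_j-x_{\min})\, e^{-\alpha(x_j-x_{\min})}$. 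For each $j\notin\mathcal{M}$ one has $x_j-x_{\min}>0$, and the elementary limit $t\,e^{-\alpha t}\to 0$ as $\alpha\to+\infty$ for fixed $t>0$ forces every summand to vanish. A squeeze then gives $\lim_{\alpha\to+\infty}\bigl(H_\alpha(x)-x_{\min}\bigr)=0$, hence $\limsup_{\alpha\to+\infty}H_\alpha(x)\le x_{\min}$.

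Combining the two bounds yields $H_\infty(x)=\lim_{\alpha\to+\infty}H_\alpha(x)=x_{\min}=\min_j x_j$. There is no genuinely difficult step here; the only point that needs a little care is the possibility of ties among the minimizers, which is precisely why I keep all $m$ unit terms in the denominator so that it stays bounded below by $1$ (and above by $K$), preventing any ``$0/0$'' issue when passing to the limit. The whole argument is just the standard sandwich estimate for a Boltzmann/softmin average at vanishing temperature.
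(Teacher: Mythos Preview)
Your proof is correct and is essentially the same approach as the paper's: both factor the exponentials by the minimum value and use that $e^{-\alpha(x_j-x_{\min})}\to 0$ for non-minimizers while the $m$ minimizing indices each contribute a unit term. The only cosmetic difference is that the paper computes the limit of each softmin weight $\omega_j=\bigl(1+\sum_{i\neq j}e^{-\alpha(x_i-x_j)}\bigr)^{-1}$ individually, whereas you wrap the same calculation in an explicit sandwich $0\le H_\alpha(x)-x_{\min}\le\sum_{j\notin\mathcal{M}}(x_j-x_{\min})e^{-\alpha(x_j-x_{\min})}$; your version is arguably a bit more careful about justifying the interchange of limit and sum, but the underlying idea is identical.
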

\vspace{-10pt}
\begin{proof}
	See Appendix B.
\end{proof}

Remarks on Lemma 1 and Lemma 2: Lemma 1 and Lemma 2 show the convex condition of the weighted sum function (\ref{con:weighted sum}) and facilitate the decomposition of the optimization problem into two different structures in the subsequent problem formulation.
\subsection{Problem Formulation}

Denote bandwidth allocation
$\mathbf{B}$ = $\left\{b_k[n],k\in\mathcal{K},n\in\mathcal{N}\right\}$, power allocation $\mathbf{P} = \left\{ p_k[n] \mid k \in \mathcal{K}, n \in \mathcal{N} \right\},$ and trajectory design  
$\mathbf{Q} = \left\{ \mathbf{q}[n], n \in \mathcal{N} \right\}.$ We optimize average weighted sum of $K$ users' throughput over $N$ time slots by jointly optimizing the bandwidth and power allocation, as well as the UAV trajectory. 

According to the convex condition of the weighted sum function shown in Lemma 1 and Lemma 2, we decompose the optimization problem into two different structures, which can be formulated as follows.
\subsubsection{Condition 1: for $\alpha R_k[n]\leq1,\forall {k,n}$ (This condition can be satisfied by minimizing the quantization interval of $\alpha$ no matter how large $R_k[n]$ is, which will be expressed later in the simulation results analysis)}
\begin{subequations}  
	\small
	\begin{align}  
		\max_{\mathbf{B},\mathbf{P},\mathbf{Q}} \quad & \frac{1}{N}\sum_{n=1}^{N}\sum_{k=1}^{K}\frac{e^{-\alpha R_k[n]}}{\sum_{i=1}^{K}e^{-\alpha R_i[n]}}R_k[n] \label{conf:objective function1} \\  
		\text{s.t.} \quad &\sum_{k=1}^{K}b_k[n]\leq1, \sum_{k=1}^{K}p_k[n]\leq1\quad\forall n\label{band1}\\ 
		&0\leq b_k[n]\leq1, 0\leq p_k[n]\leq1\quad\forall k,n.\label{band2}\\
		& \mathbf{q}[1]=\mathbf{q}_I, \mathbf{q}[N]=\mathbf{q}_F \label{conf:trajectory constraint1} \\  
		& ||\mathbf{q}[n+1]-\mathbf{q}[n]||^2\leq V_\mathrm{max}^2\delta_t^2,\quad n=1,2,\dots,N-1 \label{conf:trajectory constraint2}  
	\end{align}  
	\label{conf:optimization problem1}  
\end{subequations} 
where $R_k[n]=b_k[n]\mathrm{log}_2 \left\{ 1+\left[C_1+\frac{C_2}{1+\mathrm{exp}(-(B_1+B_2\theta_k[n]))}\right]\right. \\ 
\left. \times \frac{p_k[n]\gamma_0}{b_k[n]d_k^2[n]} \right\},\theta_k[n] = \frac{H}{\sqrt{H^2+||\mathbf{q}[n]-\mathbf{w}_k||^2}}$, $\gamma_0 \triangleq \frac{Ph_0}{BN_0}$, $\mathbf{q}_I$ and $\mathbf{q}_F$ are the UAV's initial and final positions. Constraints $(\ref{band1})-(\ref{band2})$ are resource allocation constraints. The trajectory constraints of the USA are expressed by $(\ref{conf:trajectory constraint1})-(\ref{conf:trajectory constraint2})$.
\subsubsection{Condition 2: for $\alpha \to +\infty$ (It's a maximizing minimum instantaneous throughput problem)}
\begin{subequations}
	\small  
	\begin{align}  
		\max_{\mathbf{B},\mathbf{P},\mathbf{Q}} \quad & \frac{1}{N}\sum_{n=1}^{N}\eta_n \\  
		\text{s.t.} \quad & (\ref{band1}),(\ref{band2}),(\ref{conf:trajectory constraint1}),(\ref{conf:trajectory constraint2})\\
		&R_k[n]\geq \eta_n,\quad\forall k,n  
	\end{align}  
	\label{conf:maxmin problem1}
\end{subequations}
where $\eta_n\triangleq \underset{k\in\mathcal{K}}\min R_k[n],\forall n$.

\begin{algorithm}[t!]
	\caption{Joint Optimization of Bandwidth Allocation, Power Allocation and UAV Trajectory for $\alpha R_k[n]\leq1,\forall {k,n}$}
	\label{algorithm}
	\small
	\begin{algorithmic}[1] 
		\STATE Initialize $\mathbf{B}^0, \mathbf{P}^0, \mathbf{Q}^0$, $r_{\mathrm{max}}$, $\epsilon$ and let $r=0$.
		\STATE Calculate the initial objective value $V^0$ \text{w.r.t.} $\mathbf{B}^0, \mathbf{P}^0, \mathbf{Q}^0$.
		\REPEAT
		\STATE $r=r+1$.
		\STATE Given $\mathbf{Q}^{r-1}$, obtain $\mathbf{B}^{r}$ and $\mathbf{P}^{r}$ by solving problem $(\ref{conf:optimization problem2})$.
		\STATE Given $\mathbf{B}^{r}$, $\mathbf{P}^{r}$ and $\mathbf{Q}^{r-1}$, obtain $\mathbf{Q}^{r}$ by solving problem $(\ref{conf:optimization problem5})$.
		\STATE Given $\mathbf{B}^{r}$, $\mathbf{P}^{r}$ and $\mathbf{Q}^{r}$, obtain the objective Value $V^r$ for the $r$th iteration.
		\UNTIL $|V^r-V^{r-1}|\leq\epsilon$ or $r\geq r_{\mathrm{max}}$.
		\STATE $\mathbf{B}^{r}$, $\mathbf{P}^{r}$ and $\mathbf{Q}^{r}$ are the obtained optimal solution.		
	\end{algorithmic}
	\label{algorithm1}
\end{algorithm}
\addtolength{\topmargin}{0.141in}
\section{Joint Resource Allocation and Trajectory Design Algorithm}\label{Sec:Sol}    
Problem $(\ref*{conf:optimization problem1})$ and $(\ref{conf:maxmin problem1})$ are non-convex problems due to the lack of joint concavity of $R_k[n]$ with respect to the optimization variables $\mathbf{B},\mathbf{P}$ and $\mathbf{Q}$. Under both two conditions, we decompose the original optimization problem into several optimization subproblems and solve them iteratively until convergence is achieved.
\subsection{Condition 1: for $\alpha R_k[n]\leq1,\forall {k,n}$}
\subsubsection{Joint Bandwidth and Power Allocation}
We jointly optimize the bandwidth allocation $\mathbf{B}$ and power allocation $\mathbf{P}$ for any given UAV trajectory $\mathbf{Q}$. Hence, the problem $(\ref{conf:optimization problem1})$ can be reformulated as
\begin{subequations}  
	\small
	\begin{align}  
		\max_{\mathbf{B},\mathbf{P}} \quad & \frac{1}{N}\sum_{n=1}^{N}\sum_{k=1}^{K}\frac{e^{-\alpha R_k[n]}}{\sum_{i=1}^{K}e^{-\alpha R_i[n]}}R_k[n] \label{conf:objective function2_1} \\  
		\text{s.t.} \quad & (\ref{band1}),(\ref{band2})
	\end{align}  
	\label{conf:optimization problem2}  
\end{subequations}
where $R_k[n] \triangleq b_k[n]\mathrm{log}_2 ( 1+\gamma_k[n] \frac{p_k[n]}{b_k[n]})$,$\gamma_k[n] \triangleq (C_1+\frac{C_2}{1+\mathrm{exp}(-(B_1+B_2\theta_k[n]))})\frac{\gamma_0}{d_k^2[n]}$,$\forall k,n$. With the objective function and all constraints convex, the optimization problem (\ref{conf:optimization problem2}) is a convex problem, which can be solved by interior-point method.
\subsubsection{UAV Trajectory Optimization}
We optimize the UAV trajectory $\mathbf{Q}$ given specific bandwidth allocation $\mathbf{B}$ and power allocation $\mathbf{P}$. The optimization problem can be expressed as
\begin{subequations}  
	\small
	\begin{align}  
		\max_{\mathbf{Q}} \quad & \frac{1}{N}\sum_{n=1}^{N}\sum_{k=1}^{K}\frac{e^{-\alpha R_k[n]}}{\sum_{i=1}^{K}e^{-\alpha R_i[n]}}R_k[n] \label{conf:objective function3} \\  
		\text{s.t.} \quad &(\ref{conf:trajectory constraint1}),(\ref{conf:trajectory constraint2}) 
	\end{align}  
	\label{conf:optimization problem3}  
\end{subequations}  
where $R_k[n] \triangleq b_k[n]\mathrm{log}_2 \left[ 1+(C_1+\frac{C_2}{1+e^{-(B_1+B_2\theta_k[n])}})\right. \\
\left. \times \frac{\hat{\gamma}_k[n]}{H^2+||\mathbf{q}[n]-\mathbf{w}_k||^2}\right]$$,$$\hat{\gamma}_k[n]\triangleq\frac{\gamma_0p_k[n]}{b_k[n]}$,$\forall k,n$. The problem $(\ref{conf:optimization problem3})$ is non-convex due to the non-concavity of $(\ref{conf:objective function3})$ and non-convexity of constraint $(\ref{conf:trajectory constraint2})$ with respect to $\mathbf{q}[n]$. Since a convex function can be bounded by its first-order Taylor expansion, and $R_k[n]$ is convex with respect to $(1+e^{-(B_1+B_2\theta_k[n])})$ and $(H^2+||\mathbf{q}[n]-\mathbf{w}_k||^2)$~\cite{joint}. Therefore, we introduce a slack variable and adopt first-order Taylor approximation to transform the problem $(\ref{conf:optimization problem3})$ into a convex problem as follows
\begin{subequations}    
	\small
	\begin{align}    
		\max_{\mathbf{Q}} \quad & \frac{1}{N}\sum_{n=1}^{N}\sum_{k=1}^{K}\frac{e^{-\alpha \tilde{R}_k^{lb,r}[n]}}{\sum_{i=1}^{K}e^{-\alpha \tilde{R}_i^{lb,r}[n]}}\tilde{R}_k^{lb,r}[n] \label{conf:objective function5} \\    
		\text{s.t.} \quad & \Theta_k[n]\leq B_1+B_2\theta_k^{lb,r}[n], \quad \forall k, n \label{conf:slack variable2} \\    
		\quad &(\ref{conf:trajectory constraint1}),(\ref{conf:trajectory constraint2})    
	\end{align}    
	\label{conf:optimization problem5}    
\end{subequations}
where
\begin{equation}
	\small
	\begin{aligned}
        \tilde{R}^{lb,r}_k[n]
		\triangleq\tilde{R}_k^r[n]-\psi_k^r[n]\left(e^{-\Theta_k[n]}-e^{-\Theta_k^r[n]}\right)\\
		-\varphi_k^r[n]\left(||\mathbf{q}[n]-\mathbf{w}_k||^2-||\mathbf{q}^r[n]-\mathbf{w}_k||^2\right)
	\end{aligned}
\end{equation}
\vspace{-5pt}
\begin{equation}
	\small
	\tilde{R}_k^r[n]=b_k[n]\mathrm{log}_2\left[1+\left(C_1+\frac{C_2}{x_0}\right)\frac{\hat{\gamma}_k[n]}{y_0}\right]
\end{equation}
\vspace{-5pt}
\begin{equation}
	\small
	\psi_k^r[n]=b_k[n]\frac{C_2\hat{\gamma}_k[n]\mathrm{log}_2e}{x_0\left(x_0y_0+\left(C_1x_0+C_2\right)\hat{\gamma}_k[n]\right)}
\end{equation}
\vspace{-5pt}
\begin{equation}
	\small
	\varphi_k^r[n]=b_k[n]\frac{\left(C_1x_0+C_2\right)\hat{\gamma}_k[n]\mathrm{log}_2e}{y_0\left(x_0y_0+\left(C_1x_0+C_2\right)\hat{\gamma}_k[n]\right)}
\end{equation}
\vspace{-5pt}
\begin{equation}
	\small
	x_0=1+e^{-\Theta_k^r[n]},y_0=H^2+||\mathbf{q}^r[n]-\mathbf{w}_k||^2
\end{equation}
\vspace{-5pt}
\begin{equation}
	\scalebox{0.8}{$
	\begin{aligned}
		\theta_k^{lb,r}[n]\triangleq \frac{H}{\sqrt{H^2+||\mathbf{q}^r[n]-\mathbf{w}_k||^2}}-\frac{H}{2\left(H^2+||\mathbf{q}^r[n]-\mathbf{w}_k||^2\right)^{\frac{3}{2}}}\\
		\times\left(||\mathbf{q}[n]-\mathbf{w}_k||^2-||\mathbf{q}^r[n]-\mathbf{w}_k||^2\right)
    \end{aligned}$}
\end{equation}
where $\Theta_k[n]$ is a slack variable and $r$ is the iteration index.

Given that all constraints are convex, the problem $(\ref{conf:optimization problem5})$ is a convex problem and can be solved using the interior-point method.
\subsubsection{Overall Algorithm}
The original problem $(\ref{conf:optimization problem1})$ is divided into two subproblems $(\ref{conf:optimization problem2})$ and $(\ref{conf:optimization problem5})$ respectively. The overall algorithm for problem $(\ref{conf:optimization problem1})$ is shown in \textbf{Algorithm \ref{algorithm1}}, where $r_{\mathrm{max}}$ denotes the maximum iterations and $\epsilon$ denotes the tolerance error. Steps 2-7  involve alternately and iteratively solving problems $(\ref{conf:optimization problem2})$ and $(\ref{conf:optimization problem5})$ until the objective value converges or the maximum number of iterations is achieved.

\begin{figure}[!t]
	\centering  
	\includegraphics[width=0.8\columnwidth]{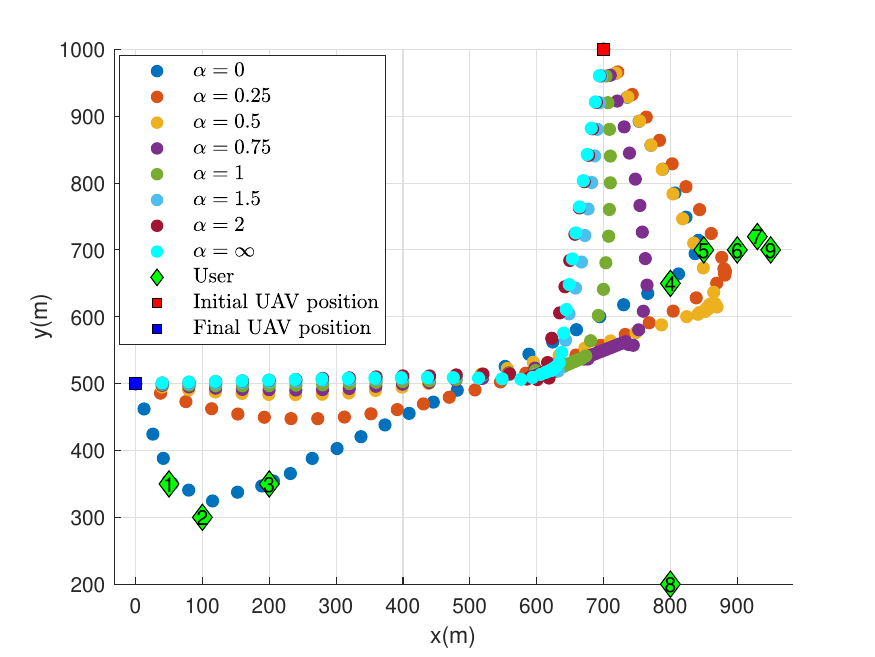}  
	\caption{Optimal trajectories under different values of fairness factor $\alpha$}  
	\label{fig1}  
\end{figure}

\subsection{Condition 2: for $\alpha \to +\infty$}
\subsubsection{Joint Bandwidth and Power Allocation}
The problem $(\ref{conf:maxmin problem1})$ can be reformulated as
\begin{subequations}  
	\small
	\begin{align}  
		\max_{\mathbf{B},\mathbf{P}} \quad & \frac{1}{N}\sum_{n=1}^{N}\eta_n \\  
		\text{s.t.} \quad & (\ref{band1}),(\ref{band2}) \\  
		&R_k[n]\geq \eta_n,\quad\forall k,n
		\label{conf:rate constraint}  
	\end{align}  
	\label{conf:maxmin problem2} 
\end{subequations}
where $R_k[n] = b_k[n]\mathrm{log}_2 ( 1+\gamma_k[n] \frac{p_k[n]}{b_k[n]})$,$\gamma_k[n] = (C_1+\frac{C_2}{1+\mathrm{exp}(-(B_1+B_2\theta_k[n]))})\frac{\gamma_0}{d_k^2[n]}$,$\forall k,n$. 
\subsubsection{UAV Trajectory Optimization}
The problem $(\ref{conf:maxmin problem1})$ can be reformulated as
\begin{subequations}  
	\small
	\begin{align}  
		\max_{\boldsymbol{\eta},\mathbf{Q}} \quad & \frac{1}{N}\sum_{n=1}^{N}\eta_n \\  
		\text{s.t.} \quad &(\ref{conf:trajectory constraint1}),(\ref{conf:trajectory constraint2})\\
		&R_k[n]\geq \eta_n,\quad\forall k,n.  
	\end{align}  
	\label{conf:maxmin problem3} 
\end{subequations}
Remark: Problems (\ref{conf:maxmin problem2}) and (\ref{conf:maxmin problem3}) are classic maximizing minimum problems~\cite{commonthroughput}~\cite{joint}, which can be solved by solving Lagrange dual, and employing Taylor approximation respectively. We omit the solutions due to the limitation of space.

\section{Numerical Results}\label{Sec:Results}
In this section, we evaluate the performance of our proposed scheme. We consider a system with $K=9$ ground users arbitrarily distributed on a horizontal plane. The UAV is assumed to fly at a fixed altitude $H=500$ m~\cite{commonthroughput} with a given maximum speed $V_\mathrm{max}=40$ m/s. The total available bandwidth is $B=10$ MHz and the noise power spectrum density is $N_0=-169$ dBm/Hz. The maximum transmission power is $P=0.1$ W and the channel power gain at the reference distance $d_0=1$ m is $h_0=-50$ dB. The Rician fading model parameters are given by $B_1=-4.3221$, $B_2=6.0750$, $C_1=0$ and $C_2=1$~\cite{joint}. The flight period and number of time slots are set as $T=50$s and $N=50$ respectively.

In Fig. \ref{fig1}, we compare the optimal trajectories of the UAV under different values of fairness factor $\alpha$. For clarity, the users are marked by `$\diamond$'s with their respective numbers inside. It can be observed that as the fairness factor $\alpha$ changes, the UAV adjusts its trajectories accordingly. When $\alpha$ increases, the UAV gradually shifts its trajectories from user-dense regions to the sparse one, to meet the fairness requirement.

In Fig. \ref{fig:f2}, we compare the throughput of all the users in different time slots with different values of fairness factor $\alpha$. It can be observed that there exists a `channel-quality bias' in resource allocation. When the fairness requirement is not strict (i.e. $\alpha$ is small), the UAV allocates bandwidth and power preferentially to users with better channel quality. With the increase of $\alpha$, the UAV gradually allocates more resources to users experiencing poorer channel quality. When $\alpha=0$, our scheme converges to a water-filling scheme, allocating maximal resources towards users with better channel quality. When $\alpha=\infty$, our scheme converges to channel inversion scheme, allocating resources fairly to all users and ensuring that each user's throughput remains consistent across any given time slot.
\begin{figure}[!t]  
	\centering  
	\begin{subfigure}[b]{0.48\linewidth} 
		\centering  
		\includegraphics[width=\linewidth]{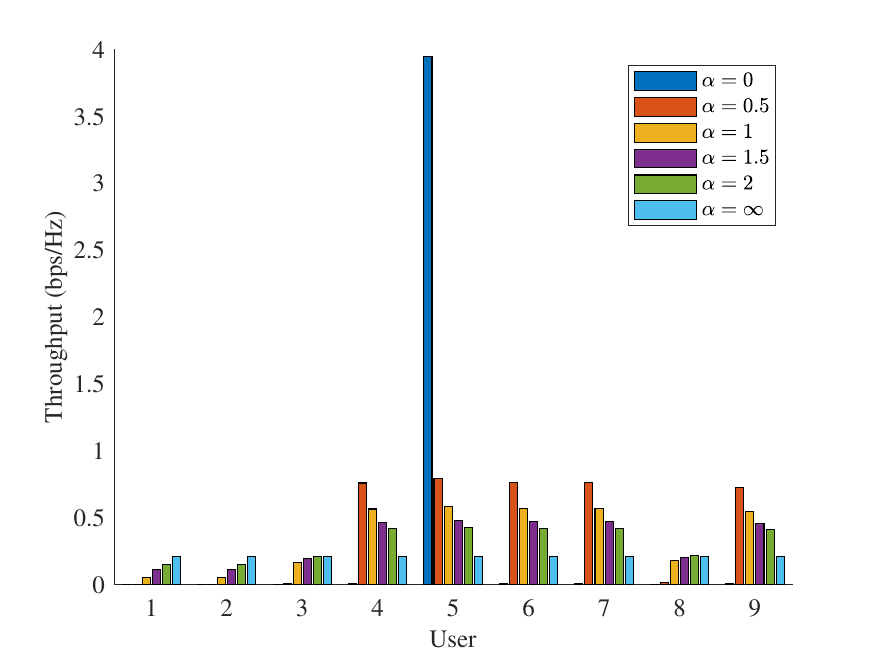}  
		\caption{Time slot $n=1$}  
		\label{fig:f2a}  
	\end{subfigure}  
	\hfill 
	\begin{subfigure}[b]{0.48\linewidth}  
		\centering  
		\includegraphics[width=\linewidth]{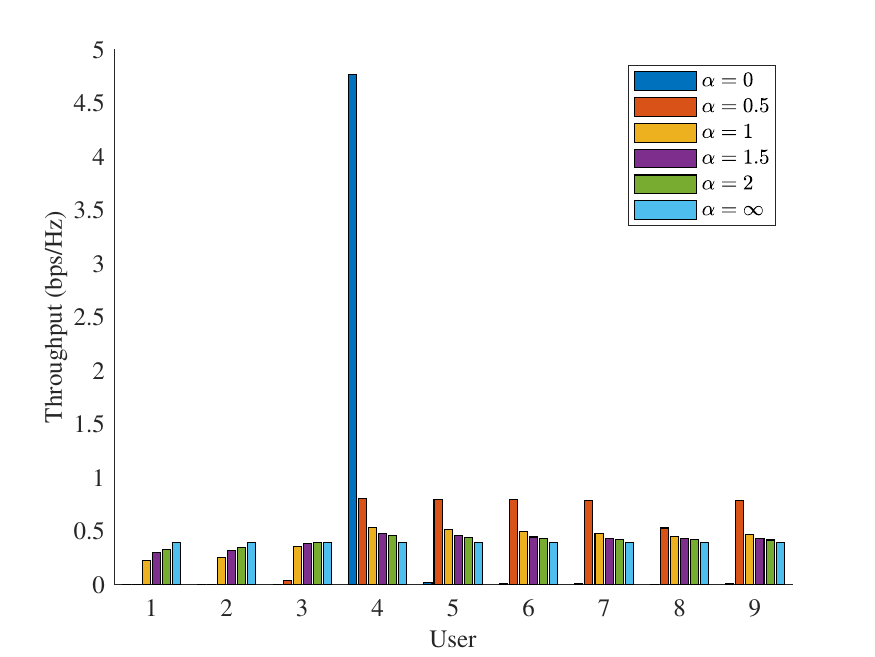}  
		\caption{Time slot $n=17$}  
		\label{fig:f2b}  
	\end{subfigure}  
	
	\begin{subfigure}[b]{0.48\linewidth}  
		\centering  
		\includegraphics[width=\linewidth]{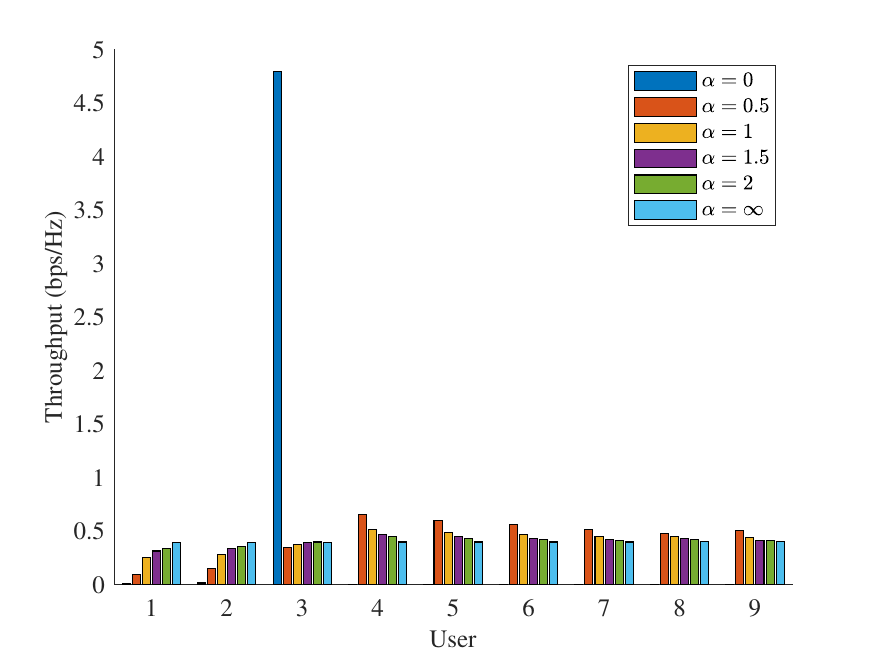}  
		\caption{Time slot $n=34$}  
		\label{fig:f2c}  
	\end{subfigure}  
	\hfill  
	\begin{subfigure}[b]{0.48\linewidth}  
		\centering  
		\includegraphics[width=\linewidth]{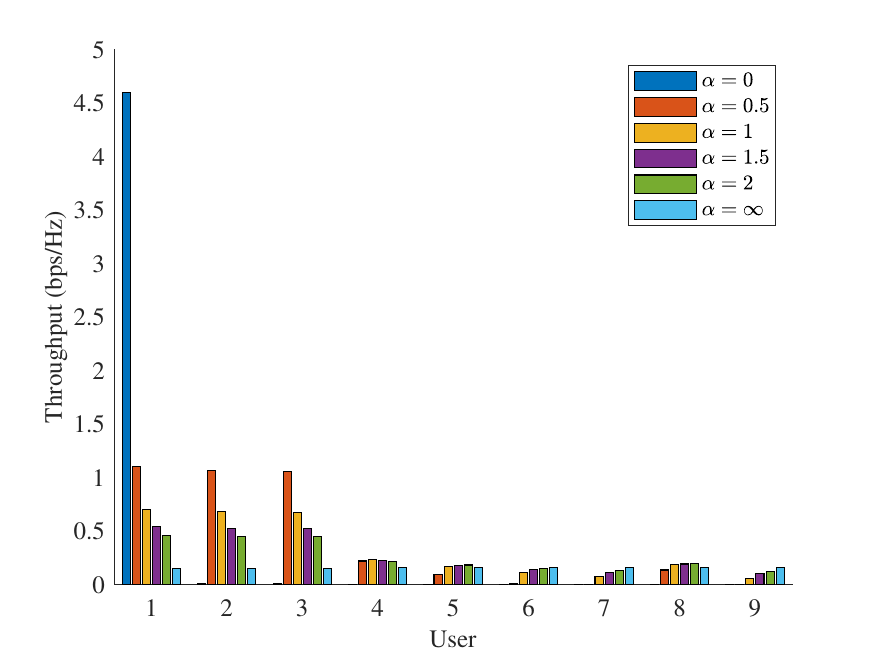} 
		\caption{Time slot $n=50$}  
		\label{fig:f2d}  
	\end{subfigure}  
	
	\caption{Throughput of all the users in different time slots with different values of fairness factor $\alpha$}  
	\label{fig:f2}  
\end{figure}

\begin{figure}[!t]    
	\centering    
	\begin{subfigure}[t]{0.48\linewidth}  
		\centering    
		\includegraphics[width=\linewidth]{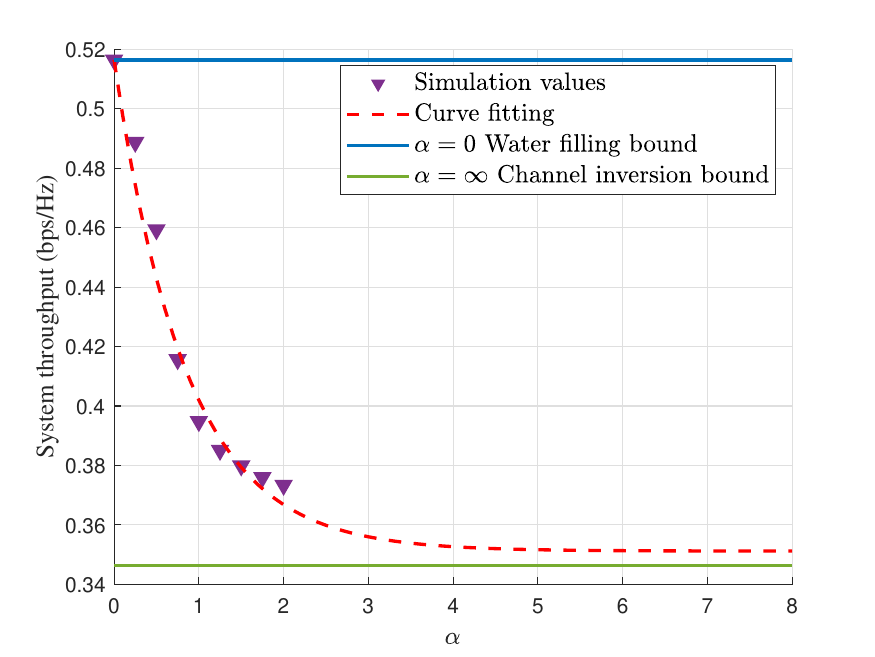} 
		\caption{System throughput versus fairness factor $\alpha$}    
		\label{fig:f3a}    
	\end{subfigure}    
	\hfill 
	\begin{subfigure}[t]{0.48\linewidth}  
		\centering    
		\includegraphics[width=\linewidth]{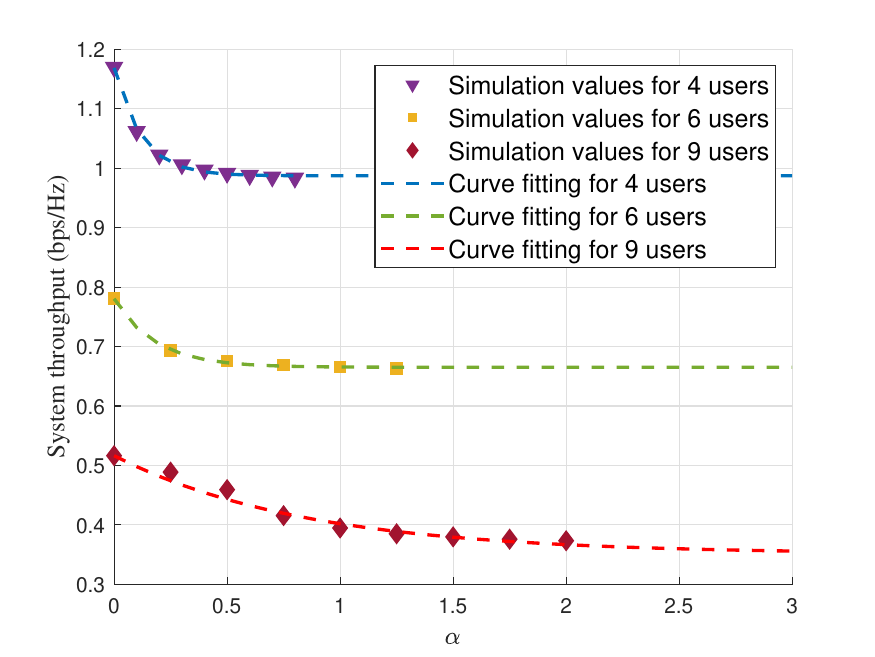} 
		\caption{System throughput versus fairness factor $\alpha$ with different numbers of users}    
		\label{fig:f3b}    
	\end{subfigure}    
	\caption{Comparison of system throughput versus fairness factor $\alpha$ in different scenarios}  
	\label{fig:f3}    
\end{figure}

In Fig. \ref{fig:f3} (a), we analyze system throughput (i.e. average throughput of $K$ users over $N$ time slots) of the developed UAV communication system. We select several values of $\alpha$ that ensure the convexity of the initial optimization problem (\ref{conf:optimization problem1}) and get the simulation values marked by `$\triangledown$'s. By plotting the fitting curve of these simulation values, we observe that the system throughput decays exponentially with respect to $\alpha$. Moreover, the system throughput is tightly bounded by the water-filling scheme and channel inversion scheme. This phenomenon reveals the trade-off between fairness and the system's throughput, which can be flexibly adjusted by tuning the value of fairness factor $\alpha$. In Fig. \ref{fig:f3} (b), we compare the system throughput with different numbers of ground users. We find that fewer users result in higher system throughput. However, higher throughput implies a smaller dynamic range of $\alpha$ that guarantees the convex constraints $\alpha R_k[n]\leq1$, which can be easily solved by minimizing the quantization interval of $\alpha$. It can also be observed that the curve decays faster with fewer users, which can be attributed to the numerator of the weighted function (\ref{weighted function}). By taking the derivative, we can ascertain that the numerator is more sensitive to the change of $\alpha$ with a higher value of throughput, indicating a faster system throughput decay with respect to $\alpha$.

\begin{figure}[!t]
	\centering
	\begin{subfigure}[t]{0.48\linewidth}
		\centering
		\includegraphics[width=\linewidth]{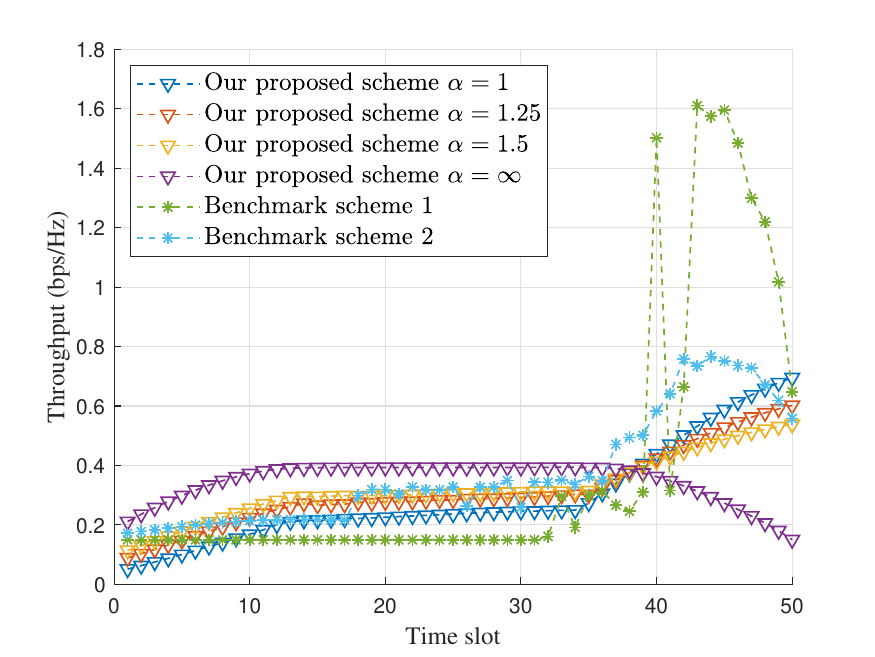}
		\caption{Throughput of our proposed scheme and benchmark scheme versus time slot with different factor values}
		\label{fig:f4a}
	\end{subfigure}
	\begin{subfigure}[t]{0.48\linewidth}
		\centering
		\includegraphics[width=\linewidth]{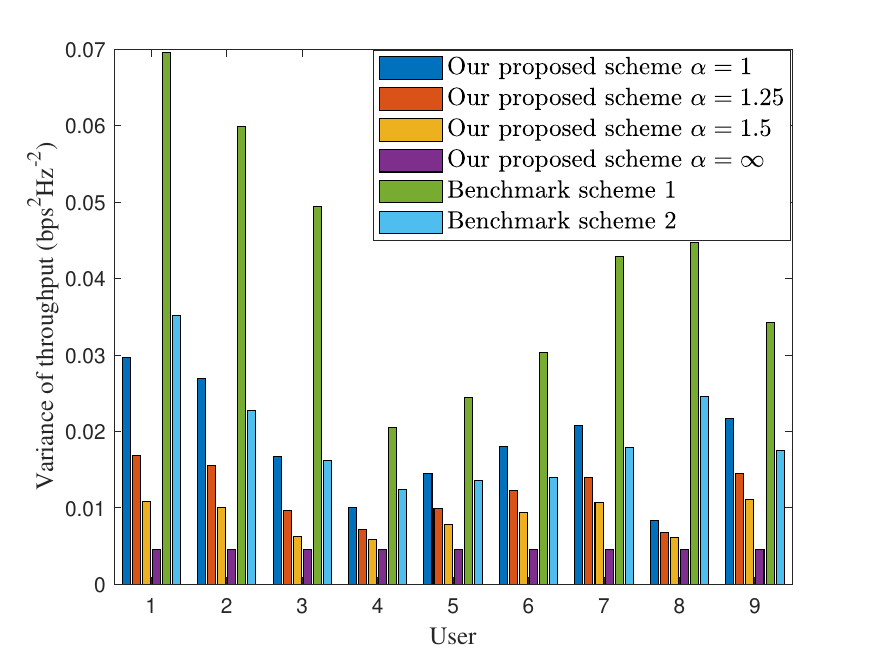}
		\caption{Variance of throughput of our proposed scheme and benchmark scheme among all users with different factor values}
		\label{fig:f4b}
	\end{subfigure}
	\caption{Comparison of performance between our proposed scheme and benchmark scheme}
	\label{fig:f4}
\end{figure}
In Fig. \ref{fig:f4}, we compare the performance of our proposed scheme and benchmark schemes with different fairness factor values. The benchmark schemes maximize the minimum average throughput of $K$ users (i.e. $\max\eta$, where $\eta=\min_{k\in\mathcal{K}}\frac{1}{N}\sum_{n=1}^{N}R_k[n]$). We consider two types of QoS constraints:$R_k[n]\geq R_{th}$ and $R_k[n]\geq\lambda\eta,\forall k,n$, which are referred to as `benchmark scheme 1' and `benchmark scheme 2' respectively. In the simulations, we adopt the most stringent QoS constraints for both benchmark schemes.
In Fig. \ref{fig:f4} (a), without loss of generality, we compare the throughput of `user 1' in all $N$ time slots between our proposed scheme and benchmark schemes with different fairness factor values. By enhancing the fairness requirement, our proposed scheme dynamically adjusts resource allocation, compensating users in poorer channel conditions and minimizing throughput fluctuations. We use the variance of throughput as a metric to quantify the stability of throughput as depicted in Fig. \ref{fig:f4} (b). Apparently, if relatively strict fairness requirement is imposed, our scheme is able to better guarantee the throughput stability than both benchmarks, even if they have employed the most stringent QoS constraints.

\section{Conclusion}\label{Sec:conclusion}
In this paper, we have proposed a trade-off control mechanism for UAV-enabled wireless communication systems based on a new weighted function. We formulated the optimization problem to maximize the weighted sum of all users' throughput. According to the convex condition of the weighted function, we decomposed the optimization problem into two different structures, which were further decomposed into two separate subproblems. Due to their non-convexity, we reconstructed them as convex problems and proposed an efficient iterative algorithm to solve them. Simulation results have verified our proposed scheme's flexible adjustment between system throughput and user fairness. Comparisons with benchmark schemes have demonstrated our proposed scheme's better performance in throughput stability and QoS provisioning.

\renewcommand{\appendixname}{Appendix~\Alph{section}}
\appendix[Proof of Lemma \ref{lemma1}]\label{appendix 1} 
We compute the first-order partial derivative of $H_\alpha$$(x)$, which is as follows
\begingroup
\smaller
\begin{align*}
		\frac{\partial H_\alpha (x)}{\partial x_j}=\frac{(1-\alpha x_j)e^{-\alpha x_j}\sum_{i=1}^{K}e^{-\alpha x_i}}{\left(\sum_{i=1}^{K}e^{-\alpha x_i}\right)^2}
		+\frac{\alpha e^{-\alpha x_j}\sum_{i=1}^{K}x_i e^{-\alpha x_i}}{\left(\sum_{i=1}^{K}e^{-\alpha x_i}\right)^2}
\end{align*}
\endgroup
where $j=1,2,\dots,n$. It's evident that $\frac{\partial H_\alpha (x)}{\partial x_j}$ is non-negative when $\alpha{x_i}\leq1,\forall i$ is satisfied. Thus $H_\alpha(x)$ is non-decreasing.
As for the convexity, we compute the Hessian matrix of $H_\alpha (x)$ defined as $F \triangleq \nabla^2 H_\alpha (x)$, and next show that $v^T Fv$ is non-positive for any non-zero column vector $v=[v_1,v_2,\dots,v_n]^T$. More intuitively, we take its opposite form
\begingroup
\smaller
\begin{align*}
	-{v^T}Fv = A_1 - 2A_2 + A_3 - 2A_4   
\end{align*}
\endgroup 
where
\begingroup
\smaller
\begin{align*}
	A_1 &= \left(\sum_{i=1}^{K}e^{-\alpha x_i}\right)^{-1} \alpha \left(\sum_{i=1}^{K}(2-\alpha x_i)e^{-\alpha x_i}v_i^2\right) \notag \\
	A_2 &= \left(\sum_{i=1}^{K}e^{-\alpha x_i}\right)^{-2} \alpha \left(\sum_{i=1}^{K}(1-\alpha x_i)e^{-\alpha x_i}v_i\right) \left(\sum_{i=1}^{K}e^{-\alpha x_i}v_i\right) \notag \\
	A_3 &= \left(\sum_{i=1}^{K}e^{-\alpha x_i}\right)^{-2} \alpha^2 \left(\sum_{i=1}^{K}x_i e^{-\alpha x_i}\right) \left(\sum_{i=1}^{K}e^{-\alpha x_i}v_i^2\right) \notag \\
	A_4 &= \left(\sum_{i=1}^{K}e^{-\alpha x_i}\right)^{-3} \alpha^2 \left(\sum_{i=1}^{K}x_i e^{-\alpha x_i}\right) \left(\sum_{i=1}^{K}e^{-\alpha x_i}v_i^2\right).
\end{align*}
\endgroup
By applying the AM-GM inequality and the Cauchy-Schwarz inequality, it can be verified that
\allowdisplaybreaks
\begingroup  
\smaller  
	\begin{align*}  
		&A_1 - 2A_2\\  
		&\geq \alpha \left( \sum_{i=1}^{K} e^{-\alpha x_i} \right)^{-2} \left\{ \left(\sum_{i=1}^{K} e^{-\alpha x_i}\right) \left(\sum_{i=1}^{K} (2 - \alpha x_i) e^{-\alpha x_i} v_i^2\right) \right. \notag \\  
		&\left. \quad - \left[ \left(\sum_{i=1}^{K} e^{-\alpha x_i}\right) \left(\sum_{i=1}^{K} (2 - \alpha x_i) e^{-\alpha x_i} v_i^2\right) \right. \right. \notag \\  
		&\left. \left. \quad - \alpha \left(\sum_{i=1}^{K} x_i e^{-\alpha x_i}\right) \left(\sum_{i=1}^{K} e^{-\alpha x_i} v_i^2\right) \right] \right\} \notag \\  
		&= \alpha^2 \left( \sum_{i=1}^{K} e^{-\alpha x_i} \right)^{-3} \left(\sum_{i=1}^{K} x_i e^{-\alpha x_i}\right) \left(\sum_{i=1}^{K} e^{-\alpha x_i} v_i^2\right)\left(\sum_{i=1}^{K}e^{-\alpha x_i}\right) \notag \\  
		&\geq \alpha^2 \left( \sum_{i=1}^{K} e^{-\alpha x_i} \right)^{-3} \left(\sum_{i=1}^{K} x_i e^{-\alpha x_i}\right) \left(\sum_{i=1}^{K}e^{-\alpha x_i}v_i\right)^2  
	\end{align*}  
\endgroup

Similarly, we can verify that
\begingroup
\smaller
\begin{align*}  
		A_3 - 2A_4 
		\geq -\alpha^2 \left( \sum_{i=1}^{K} e^{-\alpha x_i} \right)^{-3} \left(\sum_{i=1}^{K} x_i e^{-\alpha x_i}\right) \left(\sum_{i=1}^{K}e^{-\alpha x_i}v_i\right)^2  
\end{align*}
\endgroup

Hence, we obtain $-v^T Fv \geq0$, which means the Hessian matrix of $H_{\alpha}(x)$ is negative semidefinite. As a result, $H_{\alpha}(x)$ is concave for $\alpha{x_i}\leq1,\forall i$.
\renewcommand{\appendixname}{Appendix~B}
\appendix[Proof of Lemma \ref{lemma2}]\label{appendix 2}
We rewrite $H_\alpha$$(x)$ as
\begingroup
\smaller
\begin{align*}
		H_\alpha (x)=\sum_{j=1}^{K}\frac{e^{-\alpha x_j}}{\sum_{i=1}^{K}e^{-\alpha x_i}}x_j=\sum_{j=1}^{K}\frac{1}{1+\sum_{i\neq j}e^{-\alpha \left(x_i - x_j\right)}}x_j
\end{align*}
\endgroup
where
\begingroup
\smaller
\begin{align*}
	\lim\limits_{\alpha \to +\infty}e^{-\alpha\left(x_i - x_j\right)}=\left\{\begin{array}{ll}0,&x_j<x_i\\1,&x_j=x_i\\\infty,&x_j>x_i\end{array}\right.
\end{align*}
\endgroup
Denote $x_{\min}=\min\left\{x_1,x_2,\dots,x_n\right\}$ and the number of $x_{\min}$ for $x_j$ as $N_{\min}$, then $H_\infty (x) = \sum_{j=1}^{N_{\min}}\frac{1}{1+\left(N_{\min} -1\right)}x_{\min} = \underset{j}\min x_j$.

\bibliographystyle{IEEEtran}
\bibliography{References-GC}
\end{document}